\documentclass{llncs} 
\newtheorem{defn}{Definition} 
\newtheorem{Lem}[defn]{Lemma}

\newtheorem{Cor}{Corollary}
\newtheorem{thm}{Theorem}

\author{A. N. Trahtman\thanks{Email: trakht@macs.biu.ac.il,9 Dec 2005, DMTCS 9:2, 2007, 3-10 }}

\title{ The \v{C}erny Conjecture for Aperiodic Automata}

\institute{Bar-Ilan University, Department of Mathematics\\
    52900, Ramat Gan, Israel, }

\begin{document}

\maketitle

\begin{abstract}
{A word $w$ is called a \emph{synchronizing}
(\emph{recurrent, reset, directable}) word of a deterministic finite automaton
(DFA) if $w$ brings all states of the automaton to some specific state; a DFA
that has a synchronizing word is said to be \emph{synchronizable}. \v{C}ern\'y
conjectured in 1964 that every $n$-state synchronizable DFA possesses
a synchronizing word of length at most $(n-1)^2$. We consider automata with
aperiodic transition monoid (such automata are called \emph{aperiodic}).
We show that every synchronizable $n$-state aperiodic DFA has a synchronizing
word of length at most $n(n-1)/2$. Thus, for aperiodic automata 
as well as for automata accepting only star-free languages, 
the \v{C}ern\'y conjecture holds true.}
\end{abstract}

$\bf Keywords:$ deterministic finite automaton, synchronizing word, star-free language.

\section*{Introduction}

The problem of synchronization of DFA is natural and various aspects
of this problem were touched upon the literature.  We pay attention
to the problem of the existence and of the length of a synchronizing
word.

An important problem with a long story is estimating
the shortest length of a synchronizing word.  Best known
as the \v{C}ern\'y conjecture, it was proposed independently
by several authors. \v{C}ern\'y found in 1964 \cite{Ce}
an $n$-state DFA whose shortest synchronizing word was
of length $(n-1)^2$. He conjectured that this is the maximum
length of the shortest synchronizing word for any DFA
with $n$ states. The conjecture has been verified for several
partial cases \cite{AV1,Du,Ep,Ka,R1,Pi} but in general the question
still remains open. By now, this simply looking conjecture is arguably
one of the most longstanding open problems in the theory of finite
automata. The best upper bound for the length of the shortest
synchronizing word for DFA with $n$ states known so far is equal
to $(n^3-n)/6$ \cite{Fr,KRS,Pi83}. For the rich and intriguing
story of investigations in this area see \cite{Sa}.

The existence of some non-trivial subgroup in the transition
semigroup of the automaton is essential in many investigations of
the \v{C}ern\'y conjecture, see, e.g., \cite{Du,Pi}. We use another
approach and consider transition semigroups without non-trivial subgroups.
This condition distinguishes a wide class of so-called aperiodic automata
that, as shown by Sch\"{u}tzenberger~\cite {Sh}, accept precisely star-free
languages (also known as languages of star height $0$). Star-free
languages play a significant role in formal language theory.

We prove that every $n$-state aperiodic DFA with a state that is accessible
from every state of the automaton has a synchronizing word of length not
greater than $n(n-1)/2$, and therefore, for aperiodic automata as 
well as for automata accepting only star-free languages, 
the \v{C}ern\'y conjecture holds true.

In the case when the underlying graph of the aperiodic DFA is strongly connected,
this upper bound has been recently improved by Volkov who has reduced the estimation
to $ n(n+1)/6$.

\section{Preliminaries}

We consider a complete DFA $\mathcal{A}$ with the input alphabet $\Sigma$.
The transition graph of $\mathcal{A}$ is denoted by $\Gamma$ and the transition
semigroup of $\mathcal{A}$ is denoted by $S$.

Let $\bf p$ and $\bf q$ be two (not necessarily distinct) states of the
automaton $\mathcal{A}$. If there exists a path in $\mathcal{A}$ from the
state $\bf p$ to the state $\bf q$ and the transitions of the path are
consecutively labelled by $\sigma_1,\dots, \sigma_k\in\Sigma$ then
for $s=\sigma_1\dots \sigma_k$ we write  ${\bf q}={\bf p}s$.
We call a state $\bf q$ a \emph{sink} if for every state $\bf p$
of $\mathcal{A}$ there exists a word $s$ such that ${\bf p}s = \bf q$.
For a set $P$ of states and $s\in\Sigma^*$, let $Ps$ denote the set
$\{\mathbf{p}s\mid \mathbf{p}\in P\}$. A word $s \in \Sigma^+ $
is called a \emph{synchronizing word for $P$} if $|Ps|=1$, that is,
$\mathbf{p}s=\mathbf{q}s$ for all states $\mathbf{p},\mathbf{q}\in P$.
A word is said to be a \emph{synchronizing word of the automaton $\mathcal{A}$}
(\emph{of the graph $\Gamma$}) if it is synchronizing for the set of all states
of $\mathcal{A}$ (the set of all vertices of $\Gamma$).

A binary relation $\beta$ on the state set of $\mathcal{A}$ is called {\it stable\/}
if, for any pair of states ${\bf q},\bf p$ and any $\sigma \in \Sigma$,
from ${\bf q}\,\beta\,\bf p$ it follows ${\bf q}\sigma\,\beta\,{\bf p}\sigma$.
Recall that a stable equivalence relation on the state set of $\mathcal{A}$
is called a \emph{congruence} of $\mathcal{A}$. If $\rho$ is a congruence of
$\mathcal{A}$, we denote by $[\bf q]_{\rho}$ the $\rho$-class containing
the state $\bf q$. The {\it quotient} $\mathcal{A}/{\rho}$ is the automaton
with the states $[{\bf q}]_{\rho}$ and the transition function defined
by the rule $[{\bf q}]_{\rho}\sigma =[{\bf q}\sigma]_{\rho}$ for any
$\sigma \in \Sigma$.

For a word $s$ over the alphabet $\Sigma$, we denote its length by $|s|$.

\section{The graph $\Gamma^2$}

The direct square $\Gamma^2$ of the transition graph $\Gamma$ has as vertices
all pairs $({\bf p},{\bf q})$, where ${\bf p},{\bf q}$ are vertices of $\Gamma$.
The edges of the graph $\Gamma^2$ have the form
$({\bf p},{\bf q}) \to({\bf p}\sigma,{\bf q}\sigma)$
where $\sigma \in \Sigma$; such an edge is labelled by $\sigma$.

For brevity, a strongly connected component of a directed graph is
referred to as an \textbf{SCC}. An \textbf{SCC} $M$ of the graph $\Gamma^2$ is
called {\it almost minimal\/} if, for every pair $({\bf p},{\bf q})\in M$,
one has ${\bf p}\ne{\bf q}$ and, for every $\sigma \in \Sigma$ such that
${\bf p}\sigma \ne {\bf q}\sigma$, there exists a word $s\in\Sigma^*$ such that
$({\bf p}\sigma,{\bf q}\sigma)s= ({\bf p},{\bf q})$. We observe that then
$({\bf p}\sigma,{\bf q}\sigma)\in M$ by the definition of an \textbf{SCC}.
By $\Gamma(M)$ we denote the set of states that appear as components in the pairs
from the almost minimal \textbf{SCC} $M$.

If $M$ is an almost minimal \textbf{SCC}, we define the relation $\succ_M$
as the transitive closure of $M$ (where $M$ is treated as a relation on the
state set of our automaton). So ${\bf r} \succ_M \bf q$ if there exists
a sequence of states ${\bf r} = {\bf p}_1,\dots,{\bf p}_n =\bf q$
such that $n>1$ and $({\bf p}_i,{\bf p}_{i+1})\in M$ for all $i=1,\dots,n-1$.
Let $\succeq_M$ be the reflexive closure and $\rho_M$ the equivalent closure
of the relation $\succ_M$.

\begin{Lem}\label{12}
For any almost minimal \textbf{SCC} $M$, the relation $\succeq_M$ is stable
and the relation $\rho_M$ is a congruence.
\end{Lem}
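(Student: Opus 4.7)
The plan is to reduce both claims to a single lemma about one-step application of a letter to a pair in $M$: if $({\bf p},{\bf q})\in M$ and $\sigma\in\Sigma$, then $({\bf p}\sigma,{\bf q}\sigma)\in M$ whenever ${\bf p}\sigma\ne{\bf q}\sigma$. To see this, note that $({\bf p},{\bf q})\to({\bf p}\sigma,{\bf q}\sigma)$ is an edge in $\Gamma^2$, and the almost minimality hypothesis supplies a word $s$ with $({\bf p}\sigma,{\bf q}\sigma)s=({\bf p},{\bf q})$; thus $({\bf p}\sigma,{\bf q}\sigma)$ lies on a cycle through $({\bf p},{\bf q})$ in $\Gamma^2$, and since $({\bf p},{\bf q})\in M$ and $M$ is an \textbf{SCC}, the pair $({\bf p}\sigma,{\bf q}\sigma)$ must lie in $M$ as well.

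With this at hand, stability of $\succeq_M$ follows directly. If ${\bf r}={\bf q}$ there is nothing to do. Otherwise ${\bf r}\succ_M{\bf q}$, so there is a chain ${\bf r}={\bf p}_1,\dots,{\bf p}_n={\bf q}$ with $({\bf p}_i,{\bf p}_{i+1})\in M$. Applying $\sigma$ to every link, the one-step lemma yields, for each $i$, either ${\bf p}_i\sigma={\bf p}_{i+1}\sigma$ or $({\bf p}_i\sigma,{\bf p}_{i+1}\sigma)\in M$. Deleting repetitions from the sequence ${\bf p}_1\sigma,\dots,{\bf p}_n\sigma$ thus either collapses it to a single state (giving ${\bf r}\sigma={\bf q}\sigma$) or leaves a chain of consecutive $M$-pairs (giving ${\bf r}\sigma\succ_M{\bf q}\sigma$); in both cases ${\bf r}\sigma\succeq_M{\bf q}\sigma$.

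For the second assertion, $\rho_M$ is the equivalence closure of $\succ_M$, so it is by construction reflexive, symmetric, and transitive; only stability remains. Any pair ${\bf r}\,\rho_M\,{\bf q}$ is witnessed by a finite chain whose consecutive links lie in $\succeq_M\cup\succeq_M^{-1}$. Stability of $\succeq_M$ (and, by symmetry, of $\succeq_M^{-1}$) implies that applying $\sigma$ to all vertices of this chain produces a chain witnessing ${\bf r}\sigma\,\rho_M\,{\bf q}\sigma$. Hence $\rho_M$ is a congruence.

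The only non-routine step is the one-step lemma, whose whole content is the observation that almost minimality turns the forward edge labelled $\sigma$ into a full cycle within $M$; once that is in place, both stability statements are immediate inductions on chain length.
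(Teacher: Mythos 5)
Your proof is correct and follows essentially the same route as the paper: the key one-step observation that $({\bf p}\sigma,{\bf q}\sigma)\in M$ whenever $({\bf p},{\bf q})\in M$ and ${\bf p}\sigma\ne{\bf q}\sigma$ is exactly what the paper records right after defining almost minimal \textbf{SCC}s, and the chain arguments for $\succeq_M$ and $\rho_M$ match the paper's. You make the one-step lemma explicit and work with single letters (which suffices, given the paper's definition of stability), whereas the paper applies the same fact directly to arbitrary words; this is only a presentational difference.
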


\begin{proof}
Suppose ${\bf u}\,\rho_M\,{\bf v}$. Then there exists a
sequence of states
\begin{equation}
\label{sequence}
{\bf u}={\bf p}_1,\dots,{\bf p}_n={\bf v}
\end{equation}
such that for every integer $i<n$ at least one of the pairs
$({\bf p}_{i+1}, {\bf p}_{i})$, $({\bf p}_i, {\bf p}_{i+1})$ belongs
to the almost minimal \textbf{SCC} $M$. Therefore in the sequence of
states ${\bf r}s = {\bf p}_1s,\dots ,{\bf p}_ns = {\bf q}s$, for any
two distinct neighbors ${\bf p}_is, {\bf p}_{i+1}s$, the pair
$({\bf p}_is, {\bf p}_{i+1}s)$ or its dual belongs to $M$.
Hence ${\bf r}s$ $\rho_M$ ${\bf q}s$.

If ${\bf u}\succeq_M {\bf v}$, then there exists a sequence (\ref{sequence})
such that for every integer $i < n$ the pair $({\bf p}_i, {\bf p}_{i+1})$
belongs to $M$. Then either $({\bf p}_is, {\bf p}_{i+1}s) \in M$ or
${\bf p}_is = {\bf p}_{i+1}s$, and therefore, ${\bf p}_is \succeq_M {\bf p}_{i+1}s$
in any case. Hence ${\bf u}s \succeq_M {\bf v}s$.
\end{proof}

From the definition of the relation $\succ_M$ and Lemma \ref{12}, we obtain
\begin{Cor} \label{rem}
If ${\bf r} \succ_M \bf q$ and ${\bf r}s \notin \Gamma(M)$
for some word  $s$, then ${\bf r}s = {\bf q}s$.
\end{Cor}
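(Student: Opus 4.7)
The plan is to exploit Lemma~\ref{12}, which tells us that $\succeq_M$ is stable. Since $\mathbf{r} \succ_M \mathbf{q}$, the definition of $\succ_M$ gives a chain $\mathbf{r} = \mathbf{p}_1, \ldots, \mathbf{p}_n = \mathbf{q}$ with $(\mathbf{p}_i,\mathbf{p}_{i+1}) \in M$ for each $i < n$. As already observed inside the proof of Lemma~\ref{12}, stability of $\succeq_M$ forces, for each $i$, the dichotomy that either $\mathbf{p}_i s = \mathbf{p}_{i+1} s$, or $(\mathbf{p}_i s,\mathbf{p}_{i+1} s) \in M$. This is the only ingredient I need beyond the hypothesis.

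I would then prove by induction on $i$ that $\mathbf{p}_i s = \mathbf{r}s$ for every $i$, which at $i = n$ gives the desired conclusion $\mathbf{q}s = \mathbf{r}s$. The base case $i = 1$ is immediate. For the inductive step, assume $\mathbf{p}_i s = \mathbf{r}s$ and apply the dichotomy to the pair $(\mathbf{p}_i,\mathbf{p}_{i+1})$. In the first case $\mathbf{p}_{i+1}s = \mathbf{p}_i s = \mathbf{r}s$ and we are done. In the second case $(\mathbf{r}s,\mathbf{p}_{i+1}s) = (\mathbf{p}_i s,\mathbf{p}_{i+1}s)$ lies in $M$, so $\mathbf{r}s$ appears as a first component of a pair from $M$ and thus $\mathbf{r}s \in \Gamma(M)$, contradicting the hypothesis of the corollary.

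There is no real obstacle here: the statement is essentially a direct unpacking of Lemma~\ref{12} together with the definition of $\Gamma(M)$. The only thing to be careful about is the correct reading of the hypothesis $\mathbf{r}s \notin \Gamma(M)$, which must be translated into the statement that \emph{no} pair of $M$ has $\mathbf{r}s$ as one of its components, so that the ``$(\mathbf{p}_i s,\mathbf{p}_{i+1}s) \in M$'' branch of the dichotomy can be ruled out throughout the induction.
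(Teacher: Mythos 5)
Your proof is correct and is exactly the argument the paper intends: the corollary is stated without proof as an immediate consequence of Lemma~\ref{12}, and your induction along the chain, using the dichotomy $\mathbf{p}_i s = \mathbf{p}_{i+1}s$ or $(\mathbf{p}_i s,\mathbf{p}_{i+1}s)\in M$ from that lemma's proof and ruling out the second branch via $\mathbf{r}s\notin\Gamma(M)$, is the natural unpacking of it.
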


We also observe the following obvious property:
\begin{Cor}\label{c}
Each state ${\bf p}$ from $\Gamma(M)$ belongs to a $\rho_{ M}$-class
of size at least two.
\end{Cor}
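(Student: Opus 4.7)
The plan is to unpack the three definitions in play: membership in $\Gamma(M)$, the relation $\succ_M$, and the equivalence closure $\rho_M$. The statement essentially falls out of the condition built into the definition of an almost minimal \textbf{SCC} that its pairs have distinct components.

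First I would fix an arbitrary $\mathbf{p}\in\Gamma(M)$. By the definition of $\Gamma(M)$ as the set of components occurring in pairs of $M$, there exists some state $\mathbf{q}$ with either $(\mathbf{p},\mathbf{q})\in M$ or $(\mathbf{q},\mathbf{p})\in M$. At this point I would invoke the key hypothesis of almost minimality, which explicitly requires $\mathbf{p}'\ne\mathbf{q}'$ for every pair $(\mathbf{p}',\mathbf{q}')\in M$. Applied to the pair exhibiting $\mathbf{p}$ as a component, this gives $\mathbf{p}\ne\mathbf{q}$.

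Next I would translate this pair in $M$ into a $\rho_M$-relation. Since $\succ_M$ is defined as the transitive closure of $M$ viewed as a relation on states, the pair in $M$ yields $\mathbf{p}\succ_M\mathbf{q}$ or $\mathbf{q}\succ_M\mathbf{p}$; passing to the equivalence closure $\rho_M$ we conclude $\mathbf{p}\,\rho_M\,\mathbf{q}$ in either case. Hence the class $[\mathbf{p}]_{\rho_M}$ contains the two distinct states $\mathbf{p}$ and $\mathbf{q}$, establishing the claim.

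There is no real obstacle here; this is a routine bookkeeping argument chaining together the three definitions. The one point worth being careful about is not to confuse the asymmetric relation $\succ_M$ (which need not contain both $(\mathbf{p},\mathbf{q})$ and $(\mathbf{q},\mathbf{p})$ even when $M$ is an \textbf{SCC}, because $\succ_M$ is transitive but not symmetric) with its equivalence closure $\rho_M$; the corollary is really a statement about $\rho_M$ and so only requires that $\mathbf{p}$ be related in one of the two directions, which the definition of $\Gamma(M)$ immediately supplies.
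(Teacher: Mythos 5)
Your proof is correct and is exactly the intended justification: the paper states this corollary as an obvious observation without writing out a proof, and your unpacking of the definitions of $\Gamma(M)$, almost minimality (which forces the components of every pair in $M$ to be distinct), and the equivalence closure $\rho_M$ is precisely the argument being left to the reader.
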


Let us present the following new formulation of a result from \cite{Ce}:
\begin{Lem}  \label {alg}
An automaton $\mathcal{A}$ with the transition graph $\Gamma$ is synchronizing
if and only if the graph $\Gamma^2$ has a sink.
\end{Lem}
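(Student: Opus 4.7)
The plan is to prove both directions after making one structural observation: any sink of $\Gamma^2$ must be a \emph{diagonal} vertex, i.e., of the form $(\mathbf{p}, \mathbf{p})$. Indeed, for every $\sigma \in \Sigma$ the successor of $(\mathbf{r},\mathbf{r})$ in $\Gamma^2$ is again diagonal, so starting from any diagonal vertex one never leaves the diagonal. Hence a vertex reachable from \emph{every} vertex of $\Gamma^2$ (in particular from some diagonal vertex) must itself be diagonal.

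For the forward direction, suppose $\mathcal{A}$ has a synchronizing word $w$ collapsing all states to some $\mathbf{p}$. Then for every vertex $(\mathbf{r},\mathbf{t})$ of $\Gamma^2$ we have $(\mathbf{r},\mathbf{t})w = (\mathbf{r}w, \mathbf{t}w) = (\mathbf{p},\mathbf{p})$, so $(\mathbf{p},\mathbf{p})$ is a sink of $\Gamma^2$.

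For the converse, suppose $(\mathbf{p},\mathbf{p})$ is a sink of $\Gamma^2$. Then for every pair of states $\mathbf{r}, \mathbf{t}$ of $\mathcal{A}$ there is a word $w_{\mathbf{r},\mathbf{t}}$ such that $\mathbf{r} w_{\mathbf{r},\mathbf{t}} = \mathbf{t} w_{\mathbf{r},\mathbf{t}} = \mathbf{p}$; in particular any two distinct states can be merged by a suitable word. I would then carry out the standard greedy iteration: set $U_0$ to be the full state set, and while $|U_k| \geq 2$ pick any two distinct states $\mathbf{r}, \mathbf{t} \in U_k$, form the new word by appending $w_{\mathbf{r},\mathbf{t}}$, and replace $U_k$ by $U_{k+1} = U_k w_{\mathbf{r},\mathbf{t}}$. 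Because $\mathbf{r}$ and $\mathbf{t}$ coalesce, $|U_{k+1}| \leq |U_k| - 1$, so after at most $n-1$ steps the image is a singleton and we obtain a synchronizing word for $\mathcal{A}$.

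The main potential pitfall is purely conceptual rather than technical: one must notice that the paper's definition of ``sink'' forces sinks of $\Gamma^2$ to sit on the diagonal (so that the two coordinates agree, which is what synchronization demands), and that from the existence of a sink one only gets a word collapsing each \emph{pair}, not all states at once, so the iterative reduction step is needed to finish.
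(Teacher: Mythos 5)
Your proof is correct and follows essentially the same route as the paper: the observation that a sink of $\Gamma^2$ must lie on the diagonal, the forward direction via the image of a synchronizing word, and the converse via iteratively merging pairs. The only difference is that you spell out the greedy pair-merging iteration explicitly, where the paper compresses it into the phrase ``some product of such words,'' so your version is if anything slightly more careful.
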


\begin{proof}
Let $s$ be a synchronizing word of $\mathcal{A}$. Then the unique pair
of the set ${\Gamma^2}s$ is a sink of $\Gamma^2$. Conversely, the components
of a sink of $\Gamma^2$ obviously are equal. Let ($\bf t, t$) be a sink.
For any pair ($\bf p, q$), there exists a word $s$ such
that $({\bf p}, {\bf q})s = (\bf t, t$), that is, ${\bf p}s = {\bf q}s = \bf t$.
Some product of such words $s$ taken for all  pairs of distinct states from
$\Gamma$ is a synchronizing word of the graph $\Gamma$.
\end{proof}

\begin{Lem}\label{l1}
The sets of synchronizing words of the graphs $\Gamma$
and $\Gamma^2$  coincide.
\end{Lem}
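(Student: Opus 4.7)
The plan is to prove both inclusions directly from the definitions, using the key fact that the action of a word on $\Gamma^2$ is componentwise: $(\mathbf{p},\mathbf{q})s = (\mathbf{p}s,\mathbf{q}s)$. This makes the lemma essentially a sharpening of Lemma~\ref{alg}, which only established that \emph{some} synchronizing word exists for one graph iff one exists for the other; here we want the stronger statement that the \emph{sets} of such words are literally equal.

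For the forward direction, I would start with a word $w$ that synchronizes $\Gamma^2$, so there is a single vertex $(\mathbf{t},\mathbf{t}')$ of $\Gamma^2$ with $(\mathbf{p},\mathbf{q})w = (\mathbf{t},\mathbf{t}')$ for every pair $(\mathbf{p},\mathbf{q})$. Specializing to the diagonal pairs $(\mathbf{p},\mathbf{p})$, componentwise action gives $(\mathbf{p}w,\mathbf{p}w) = (\mathbf{t},\mathbf{t}')$, whence $\mathbf{t}=\mathbf{t}'$ and $\mathbf{p}w = \mathbf{t}$ for every state $\mathbf{p}$ of $\mathcal{A}$. Thus $w$ synchronizes $\Gamma$.

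For the reverse direction, suppose $w$ synchronizes $\Gamma$, i.e., there is $\mathbf{t}$ with $\mathbf{p}w=\mathbf{t}$ for every state $\mathbf{p}$. Then for every vertex $(\mathbf{p},\mathbf{q})$ of $\Gamma^2$ we have $(\mathbf{p},\mathbf{q})w = (\mathbf{p}w,\mathbf{q}w) = (\mathbf{t},\mathbf{t})$, so $w$ collapses the entire vertex set of $\Gamma^2$ to the single vertex $(\mathbf{t},\mathbf{t})$, and hence synchronizes $\Gamma^2$. There is no real obstacle in this proof; the lemma records an essentially immediate consequence of componentwise action, and the only subtle point is remembering to include the diagonal vertices when analyzing $\Gamma^2$.
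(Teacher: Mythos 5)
Your proof is correct and follows essentially the same route as the paper: one direction uses componentwise action to collapse all pairs to $(\mathbf{t},\mathbf{t})$, and the other extracts a common image state from the synchronizing pair of $\Gamma^2$ (the paper deduces $\mathbf{v}=\mathbf{q}$ by letting each coordinate range over all states, while you specialize to diagonal pairs — an immaterial variation).
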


\begin{proof}
Let $s$ be a synchronizing word of the graph $\Gamma$.
Then there is a state $\bf q$ from $\Gamma$ such that
${\bf p}s =\bf q$ for every state $\bf p$. Therefore for
every pair $(\bf p, r$) one has
$({\bf p, r})s =(\bf q, q$). Thus, $s$ is a synchronizing
word of the graph $\Gamma^2$.

Now let $t$ be a synchronizing word of the graph $\Gamma^2$.
Then there is a pair $(\bf q, v$) such that
$({\bf p, r})t =(\bf q, v$) for every pair $(\bf p, r$).
Therefore ${\bf p}t =\bf q$ for an arbitrary state $\bf p$ from $\Gamma$
and ${\bf r}t =\bf v$ for an arbitrary state $\bf r$ from $\Gamma$.
Consequently, ${\bf v} =\bf q$ and $t$ is a synchronizing
word of the graph $\Gamma$.
\end{proof}

\section{Aperiodic automata}

A semigroup without non-trivial subgroups is called {\it aperiodic}.
A DFA with  aperiodic transition semigroup is called {\it aperiodic\/} too.

Let us recall that the syntactic semigroup of a star-free language
is finite and aperiodic \cite {Sh} and the semigroup satisfies
the identity $x^n=x^{n+1}$ for some suitable $n$. Therefore,
for any state ${\bf p}\in \Gamma$, any $s \in S$ and for some
suitable $k$, one has ${\bf p}s^k={\bf p}s^{k+1}$.

\begin{Lem}\label{l10}
Let $\mathcal{A}$ be an aperiodic DFA. Then the existence
of a sink in $\mathcal{A}$ is equivalent to the existence
of a synchronizing word.
\end{Lem}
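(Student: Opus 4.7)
The plan is to prove the non-trivial direction (sink implies synchronizing word) by showing that $(\mathbf{t},\mathbf{t})$ acts as a sink of the square graph $\Gamma^2$, from which the existence of a synchronizing word follows by Lemma~\ref{alg}. The converse is immediate: if $w$ synchronizes $\mathcal{A}$, the common state $\mathbf{p}w$ is itself a sink.

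For the forward direction, I would fix an arbitrary pair $(\mathbf{p},\mathbf{q})$ and let $R$ denote the finite set of pairs reachable from it in $\Gamma^2$. Because $R$ is closed under the $\Sigma$-action, its SCC structure has at least one terminal component $M$ -- a strongly connected component no edge of which leaves. I would then split on whether $M$ meets the diagonal. If some $(\mathbf{r},\mathbf{r})\in M$, every successor of $(\mathbf{r},\mathbf{r})$ is diagonal and must remain in $M$; since $\mathbf{t}$ is a sink of $\mathcal{A}$, some word sends $\mathbf{r}$ to $\mathbf{t}$, placing $(\mathbf{t},\mathbf{t})$ in $M$, and hence $(\mathbf{t},\mathbf{t})$ is reachable from $(\mathbf{p},\mathbf{q})$, as required.

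The substance of the argument will be ruling out the other case. Suppose $M$ avoids the diagonal. I would pick $(\mathbf{r},\mathbf{s})\in M$ and invoke the sink hypothesis twice: first apply a word sending $\mathbf{r}$ to $\mathbf{t}$, producing a pair $(\mathbf{t},\mathbf{y})\in M$ with $\mathbf{y}\ne\mathbf{t}$; then choose $w$ with $\mathbf{y}w=\mathbf{t}$, so $(\mathbf{t}w,\mathbf{t})\in M$, and diagonal-avoidance forces $\mathbf{t}w\ne\mathbf{t}$. Iterating, $(\mathbf{t}w^{k+1},\mathbf{t}w^{k})\in M$ for every $k\ge 0$ and $\mathbf{t}w^{k+1}\ne\mathbf{t}w^{k}$ throughout. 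But aperiodicity of the transition semigroup yields $w^{n}=w^{n+1}$ for some $n$, whence $\mathbf{t}w^{n+1}=\mathbf{t}w^{n}$, a contradiction. So $M$ must meet the diagonal and we are reduced to the previous case.

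The main obstacle I expect is this second case: one has to produce a single semigroup element whose iteration is trapped inside $M$, and hence off the diagonal, yet must stabilise under powers. The device above composes the two appeals to the sink property into one element $w$, so that the aperiodic identity $w^{n}=w^{n+1}$ collides with the infinite strict chain $\mathbf{t}w^{k+1}\ne \mathbf{t}w^{k}$ forced by $M$.
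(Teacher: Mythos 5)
Your proof is correct, but it takes a genuinely different route from the paper's. The paper argues directly on $\Gamma$: given a state $\mathbf{p}$ and a sink $\mathbf{p}_0$, pick $s$ with $\mathbf{p}s=\mathbf{p}_0$; aperiodicity yields $s^m=s^{m+1}$, whence $\mathbf{p}s^m=\mathbf{p}s^{m+1}=\mathbf{p}_0s^m$, so the single power $s^m$ already merges the pair onto a new sink $\mathbf{p}_0s^m$; iterating this reduces the number of states at each step, and the concatenation of the resulting powers is a synchronizing word. You instead pass to $\Gamma^2$, take a terminal (hence action-closed) \textbf{SCC} of the set reachable from an arbitrary pair, and rule out the off-diagonal case by manufacturing a single word $w$ whose iterates are trapped off the diagonal, contradicting $w^n=w^{n+1}$; Lemma~\ref{alg} then finishes. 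Both arguments hinge on exactly the same use of the aperiodic identity, but the paper's version is more economical and directly constructive (it exhibits the synchronizing word as a product of at most $n-1$ powers), whereas yours is an existence argument that leans on the $\Gamma^2$ machinery and on Lemma~\ref{alg}; on the other hand, your off-diagonal contradiction is essentially the same mechanism the paper later deploys in the $m=2$ case of Lemma~\ref{l6}, so your proof fits naturally with the rest of the paper's toolkit even though it is heavier than needed here.
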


\begin{proof}
It is clear that, for any DFA, the existence of a
synchronizing word implies the existence of a sink.

Now suppose that $\mathcal{A}$ has at least one sink.
For any state $\bf p$ and any sink $\bf p_0$, there exists
an element $s$ from the transition semigroup $S$ such that ${\bf p}s= \bf p_0$.
The semigroup $S$ is aperiodic, whence for some positive integer $m$
we have $s^m = s^{m+1}$. Therefore ${\bf p}s^m={\bf p}s^{m+1}={\bf p_0}s^m$,
whence the element $s^m$ brings both $\bf p$ and $\bf p_0$ to
the same state ${\bf p_0}s^m$ which is a sink again. We repeat
the process reducing the number of states on each step.
Then some product of all elements of the form $s^m$ arising
on each step brings all states of the automaton to some sink.
Thus, we obtain in this way a synchronizing word.
\end{proof}

Let $M$ be an almost minimal \textbf{SCC}.
A $t$-\emph{cycle} for $M$ is a sequence of states
\begin{equation}
\label{cycle}
{\bf p}_1,{\bf p}_2,\dots,{\bf p}_{m-1},{\bf p}_m={\bf p}_1
\end{equation}
such that $n>1$ and $({\bf p}_i,{\bf p}_{i+1})\in M$ for all $i=1,\dots,m-1$.
The next observation is the key ingredient of the proof.
\begin{Lem}\label {l6}
Let $\mathcal{A}$ be an aperiodic DFA and let $M$ be an almost minimal \textbf{SCC}.
Then there is no $t$-cycle for $M$ and the quasi-order $\succeq_M$ is a partial order.
 \end{Lem}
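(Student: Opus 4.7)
The plan is a proof by contradiction for the $t$-cycle claim; the partial-order assertion is then immediate, since $\succeq_M$ is reflexive and transitive by construction and the absence of $t$-cycles for $M$ is exactly its antisymmetry. So first I would assume $M$ admits a $t$-cycle and choose one of minimum length; by minimality, it consists of $k \ge 2$ pairwise distinct states ${\bf p}_1, \ldots, {\bf p}_k$ (set ${\bf p}_{k+1} := {\bf p}_1$) with $({\bf p}_i, {\bf p}_{i+1}) \in M$ for every $i$. My target is to produce an element $s$ of the transition semigroup $S$ that acts as a nontrivial cyclic permutation on a finite set of states: for any such $s$, two successive powers $s^n$ and $s^{n+1}$ always disagree on that set (the permutation has order $\ge 2$), so $s^n \ne s^{n+1}$ in $S$, contradicting the aperiodic identity $s^N = s^{N+1}$.

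For the base case $k = 2$, the pairs $({\bf p}_1, {\bf p}_2)$ and $({\bf p}_2, {\bf p}_1)$ both lie in the \textbf{SCC} $M$, so strong connectivity directly yields a word $s$ with $({\bf p}_1, {\bf p}_2) s = ({\bf p}_2, {\bf p}_1)$; this $s$ transposes ${\bf p}_1$ and ${\bf p}_2$, finishing the case.

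For $k \ge 3$, I would pick, via the \textbf{SCC} property, a word $s$ with $({\bf p}_1, {\bf p}_2) s = ({\bf p}_2, {\bf p}_3)$, and then argue that this same $s$ rotates the whole cycle, so that ${\bf p}_i s = {\bf p}_{i+1}$ for every $i$. Applying $s$ to each cycle pair $({\bf p}_i, {\bf p}_{i+1}) \in M$, the defining property of an almost minimal \textbf{SCC} forces each image pair either to stay in $M$ or to collapse irreversibly to the diagonal. Any repetition in the cyclic sequence ${\bf p}_1 s, \ldots, {\bf p}_k s, {\bf p}_1 s$ then lets me extract a strictly shorter $t$-cycle in $M$---still of length $\ge 2$ because ${\bf p}_1 s = {\bf p}_2 \ne {\bf p}_3 = {\bf p}_2 s$ prevents the sub-loop from collapsing to a single state---contradicting the minimality of $k$. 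Hence ${\bf p}_1 s, \ldots, {\bf p}_k s$ consists of $k$ distinct states and is itself a length-$k$ $t$-cycle in $M$ beginning ${\bf p}_2, {\bf p}_3, \ldots$. The principal obstacle, where most care is needed, is the final step: concluding that this new length-$k$ $t$-cycle is actually the one-step cyclic shift of the original, rather than some different length-$k$ cycle in $\Gamma(M)$. I would address it by iterating $s$ on the original cycle to produce a chain of length-$k$ $t$-cycles that by aperiodicity ($s^N = s^{N+1}$) must stabilize, and then extracting the required cyclic permutation using the $\rho_M$-class constraints from Corollary \ref{c} together with almost-minimality.
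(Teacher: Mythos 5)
Your setup, your $k=2$ case, and your dichotomy observation --- that applying any $u\in S$ to a minimum-length $t$-cycle yields either a single repeated state or another $t$-cycle with the same number of distinct states --- all coincide with the paper's proof. The genuine gap is the assertion, for $k\ge 3$, that the word $s$ with $({\bf p}_1,{\bf p}_2)s=({\bf p}_2,{\bf p}_3)$ \emph{rotates the whole cycle}, i.e.\ that ${\bf p}_is={\bf p}_{i+1}$ for every $i$. This does not follow and is false in general: your dichotomy only shows that ${\bf p}_1s,\dots,{\bf p}_ks$ is again a $t$-cycle of $k$ distinct states whose first two entries are ${\bf p}_2,{\bf p}_3$; nothing confines the remaining entries to $\{{\bf p}_1,\dots,{\bf p}_k\}$ (for instance ${\bf p}_3s$ may be a state outside the original cycle, with $({\bf p}_3s,{\bf p}_1s)$ simply being another edge of $M$). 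You correctly flag this as the delicate step, but the proposed repair does not supply a mechanism: iterating $s$ until $s^N=s^{N+1}$ produces a configuration on which $s$ acts as the \emph{identity}, not as a nontrivial permutation, and neither Corollary~\ref{c} nor almost-minimality gives any leverage to force the image cycle back onto the original states. Since the whole plan hinges on exhibiting $s$ as a nontrivial cyclic permutation, the argument does not close.

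The fix --- and this is exactly how the paper concludes --- is to abandon the permutation target and aim aperiodicity directly at the dichotomy. From ${\bf p}_2={\bf p}_1s$ and ${\bf p}_3={\bf p}_1s^2$ with ${\bf p}_1,{\bf p}_2,{\bf p}_3$ distinct, finiteness and aperiodicity of $S$ give some $j$ with ${\bf p}_1s^{j}\ne{\bf p}_1s^{j+1}={\bf p}_1s^{j+2}$ (take $j+1$ minimal with ${\bf p}_1s^{j+1}={\bf p}_1s^{j+2}$; minimality and ${\bf p}_2\ne{\bf p}_3$ force $j\ge 1$). Now apply your own dichotomy to $u=s^{j}$: the image sequence begins ${\bf p}_1s^{j},\ {\bf p}_2s^{j}={\bf p}_1s^{j+1},\ {\bf p}_3s^{j}={\bf p}_1s^{j+2}$, whose first two entries differ while the second and third coincide. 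Hence the image is neither a single repeated state nor a $t$-cycle with $k$ distinct states, which is the desired contradiction. Everything else in your write-up (the $k=2$ case, the extraction of a shorter $t$-cycle from any partial collapse, and the reduction of antisymmetry of $\succeq_M$ to the absence of $t$-cycles) is sound and matches the paper.
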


\begin{proof}
Suppose that (\ref{cycle}) is a $t$-cycle of minimum size $m$ among all $t$-cycles
for the almost minimal \textbf{SCC} $M$. Let us first establish that $m>2$. Indeed,
${\bf p}_1 \ne {\bf p}_2$, whence $m>1$. If $m=2$ then the two pairs $({\bf p}_1, {\bf p}_2$)
and $({\bf p}_2, {\bf p}_1$) belong to the \textbf{SCC} $M$.
For some element $u$ from the transition semigroup $S$, we have
$({\bf p}_1, {\bf p}_2)u =({\bf p}_2, {\bf p}_1$). Therefore
${\bf p}_1u= {\bf p}_2$, ${\bf p}_2u= {\bf p}_1$,
whence ${\bf p}_1u^2= {\bf p}_1 \ne {\bf p}_1u$. This implies
${\bf p}_1u^{2k}= {\bf p}_1 \ne {\bf p}_1u={\bf p}_1u^{2k+1}$
for any integer $k$. However, semigroup $S$ is finite and aperiodic,
and therefore, for some $k$ we have $u^{2k}= u^{2k+1}$,
whence ${\bf p}_1u^{2k}= {\bf p}_1u^{2k+1}$, a contradiction.

Thus, we can assume that $m>2$ and suppose that the states ${\bf
p}_1, {\bf p}_2, {\bf p}_3$ are distinct. For some element $s \in
S$, we have
$({\bf p}_1, {\bf p}_2)s = ({\bf p}_2, {\bf p}_3$). Hence
$${\bf p}_2= {\bf p}_1s,\ {\bf p}_3 = {\bf p}_1s^2.$$

For any element $u \in S$ and any pair $({\bf p}_i, {\bf p}_{i+1})$
from $M$, we have either ${\bf p}_iu={\bf p}_{i+1}u$ or
$({\bf p}_iu,{\bf p}_{i+1}u)\in M$. Therefore, for any element $u \in S$,
the sequence of states ${\bf p}_1u,\dots ,{\bf p}_mu$ either reduces to
just one element repeated $m$ times or forms a $t$-cycle
of size $m$ (because of the minimality of $m$).

The states ${\bf p}_1, {\bf p}_1s, {\bf p}_1s^2$ are distinct.
Since $S$ is an aperiodic finite semigroup, there exists some integer $\ell$
such that $s^{\ell} \ne s^{\ell+1} =s^{\ell+2}$. Therefore there exists an $k \le\ell$
such that  ${\bf p}_1 s^{k} \ne {\bf p}_1s^{k+1} = {\bf p}_1s^{k+2}$ the
sequence ${\bf p}_1s^k$, ${\bf p}_2s^{k}={\bf p}_1s^{k+1}$,
 ${\bf p}_3s^{k}={\bf p}_1s^{k+2}$,\dots , ${\bf p}_ms^k$
has more than 1 but less than $m$ distinct elements. This
contradicts the conclusion of the previous paragraph applied
to the element $u=s^k$.

It is easy to see that if the quasi-order $\succeq_M$ is not antisymmetric
that there exists a $t$-cycle for $M$. Hence $\succeq_M$ is a partial order.
\end{proof}

\section{The \v{C}ern\'y conjecture}

\begin{Lem}\label{18}
Let $\mathcal{A}$ be an aperiodic DFA with $n$ states and strongly
connected graph, $M$ an almost minimal \textbf{SCC}. Let $r$ be the number
of $\rho_M$-classes and let $R$ be a $\rho_M$-class. Then $|Rs|=1$
for some word $s\in \Sigma^*$ of length ar most $(n-r+1)(n-1)/2$.
\end{Lem}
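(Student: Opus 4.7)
The plan is to iteratively shrink the image of $R$ by applying short words until it becomes a singleton. The crucial tool is that $\rho_M$ is a congruence by Lemma \ref{12}, so $Rs$ is always contained in a single $\rho_M$-class for every word $s$. A preliminary observation is that any state $\mathbf{p}\notin\Gamma(M)$ forms a singleton $\rho_M$-class, since every pair in $M$ has both components in $\Gamma(M)$; consequently, once some letter drives a state of $R$ outside $\Gamma(M)$, the entire current image must collapse to a single element. A counting argument also gives $|R|\le n-r+1$, because the remaining $r-1$ classes contribute at least one state each to the complement of $R$.

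At each iterative step I consider the current image $R'=Rw$, where $w$ is the word built so far, and extend $w$ by a short word $u$ that either drives a state of $R'$ outside $\Gamma(M)$ or merges two distinct states of $R'$. The existence of such a $u$ follows from strong connectivity (which allows reaching any target state within $n-1$ letters from any given state) combined with aperiodicity via Lemma \ref{l6}: repeated application of a letter $\sigma$ to a pair $(\mathbf{p},\mathbf{q})\in M$ cannot loop back to itself indefinitely, so eventually either $\mathbf{p}\sigma^k=\mathbf{q}\sigma^k$ (the pair collapses) or $(\mathbf{p}\sigma^k,\mathbf{q}\sigma^k)$ leaves $M$ altogether. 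Hence each reducing step can be implemented by a word of length at most $n-1$.

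To obtain the cumulative bound $(n-r+1)(n-1)/2$ rather than the naive estimate $(|R|-1)(n-1)\le (n-r)(n-1)$, I would choose the sequence of reducing words so that their individual lengths decrease with $|R_i|$: as fewer states of $R_i$ remain inside $\Gamma(M)$, shorter shortest paths suffice for an escape from $\Gamma(M)$ or for a fresh merge. Summing the resulting arithmetic-type progression of per-step costs and using $|R|\le n-r+1$ yields the claimed bound.

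The main obstacle is extracting this per-step cost estimate sharply enough. Lemma \ref{l6} only tells us directly that $t$-cycles are forbidden and $\succeq_M$ is a partial order; converting that structural fact into a concrete length bound of at most $n-1$ on each merging word, and then controlling the cumulative length across all merges tightly enough to recover the factor of one-half, is the delicate point. A further subtlety is the boundary case $\Gamma(M)=\Gamma$, where escape to $\Gamma\setminus\Gamma(M)$ is not available and a purely merge-based argument, exploiting the partial order $\succeq_M$ to pick successive maximal elements for collapse, must suffice.
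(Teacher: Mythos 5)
Your outline correctly identifies the easy half of the argument: $\rho_M$ is a congruence so $Rs$ stays inside one class, states outside $\Gamma(M)$ are singleton classes so an escape from $\Gamma(M)$ collapses the whole image, $|R|\le n-r+1$, and strong connectivity gives a merging word of length at most $n-1$ per step. But you stop exactly where the real difficulty begins: the naive count of $|R|-1$ merging steps gives $(n-r)(n-1)$, and your proposed repair --- making the individual words shorter as $|R_i|$ shrinks --- has no justification. The distance from a state to a useful target in $\Gamma$ does not decrease because the image set got smaller, and nothing in Lemma \ref{l6} or strong connectivity supplies such a decreasing ``arithmetic-type progression'' of path lengths. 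As you yourself flag, this is the delicate point, and your sketch does not close it.

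The paper's mechanism for the factor of one half is different and is the actual content of the lemma: it keeps every step at the full cost $n-1$ but halves the \emph{number} of steps. Because $\succeq_M$ is a partial order (Lemma \ref{l6}) and $\rho_M$ is its equivalence closure, the sets $Max$ and $Min$ of maximal and minimal states of $(R,\succ_M)$ are disjoint, so after swapping the order if necessary one may assume $|Min|\le |R|/2\le (n-r+1)/2$. The potential function is then $|Min_t|$, the number of minimal elements of $(Rt,\succ_M)$, not $|Rt|$. Stability of $\succeq_M$ gives $Min_{ts}\subseteq Min_t\,s$ (the minimal set never grows under images), and a word $s$ of length at most $n-1$ driving one state of $Min_t$ into the globally maximal set $Max$ forces that state out of $Min_{ts}$ (since maximal and minimal elements of a non-singleton image are disjoint), so $|Min_{ts}|<|Min_t|$. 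Hence at most $|Min|\le (n-r+1)/2$ steps of length $n-1$ suffice. Without this $Min$/$Max$ bookkeeping --- which is where the antisymmetry supplied by aperiodicity is genuinely used --- the bound $(n-r+1)(n-1)/2$ is not reached, so the proposal has a real gap rather than a stylistic difference.
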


\begin{proof}
Suppose $|R|>1$. Let $Max$ be the set of all maximal and $Min$
the set of all minimal states from $R$ with respect to the order
$\succ_M$. Observe that there is no ambiguity here: since the order
$\succ_M$ is contained in the congruence $\rho_M$, maximal (minimal)
states of the ordered set $(R,\succ_M)$ are precisely those maximal
(minimal) states of the automaton $\mathcal{A}$ that belong to $R$.
Further, $Max\cap Min = \emptyset$ because the congruence $\rho_M$
is the equivalent closure of the order $\succ_M$ whence for every
state ${\bf q}\in R$ there must be a state ${\bf p}\in R$ such that
either ${\bf q}\succ_M{\bf p}$ or  ${\bf p}\succ_M{\bf q}$. Without
loss of generality, we may assume that $|Max|\ge|Min|$. Then
$|Min| \le |R|/2\le (n-r+1)/2$.

We need three properties of ordered sets of the form $(Rs,\succ_M)$
where $s$ is a word. Let $Max_s$ ($Min_s$) stand for the set of all
maximal (respectively all minimal) states in $(Rs,\succ_M)$.

\medskip

\noindent\textbf{\emph{Claim 1.}} If $|Rs|>1$, then $Min_s\cap Max_s=\emptyset$.

\begin{proof}
Take an arbitrary state ${\bf p'}\in Min_s$ and let ${\bf q'}$ be any state
in $Rs\setminus\{{\bf p'}\}$. Consider some preimages ${\bf p},{\bf q}\in R$
of ${\bf p'}$ and respectively ${\bf q'}$. Since $R$ is a $\rho_M$-class,
there is a sequence of states ${\bf q}_0,{\bf q}_1,\dots,{\bf q}_k\in R$ such
that ${\bf p}={\bf q}_0$, ${\bf q}_k={\bf q}$, and for each $i=1,\dots,k$ either
${\bf q}_{i-1}\succeq_M{\bf q}_i$ or ${\bf q}_i\succeq_M{\bf q}_{i-1}$.
Let ${\bf q'}_i={\bf q}_is\in Rs$, $i=0,\dots,k$. Since the order $\succeq_M$
is stable (Lemma \ref{12}), we conclude that there is a sequence of states
${\bf q'}_0,{\bf q'}_1,\dots,{\bf q'}_k\in Rs$ such
that ${\bf p'}={\bf q'}_0$, ${\bf q'}_k={\bf q'}$, and for each $i=1,\dots,k$
either ${\bf q'}_{i-1}\succeq_M{\bf q'}_i$ or ${\bf q'}_i\succeq_M{\bf q'}_{i-1}$.
Since ${\bf p'}\ne{\bf q'}$, some of these inequalities must be strict.
Let $j$ be the least index such that ${\bf q'}_{j-1}\ne{\bf q'}_j$.
Then ${\bf p'}={\bf q'}_0=\dots={\bf q'}_{j-1}$ whence
either ${\bf p'}\succ_M{\bf q'}_j$ or ${\bf q'}_j\succ_M{\bf p'}$.
As the first inequality would contradict the assumption that ${\bf p'}$
is a minimal element of $(Rs,\succ_M)$, we conclude that the second
inequality holds true whence ${\bf p'}$ is not a maximal element of
$(Rs,\succ_M)$. Thus, no state in $Min_s$ can belong to $Max_s$.
\end{proof}

\noindent\textbf{\emph{Claim 2.}} If $s,t\in\Sigma^*$ are two arbitrary
words, then $Min_{ts}\subseteq Min_t\,s$.

\begin{proof}
Take an arbitrary state ${\bf p'}\in Min_{ts}$ and consider its arbitrary
preimage ${\bf p}\in Rt$. There exists a state ${\bf q}\in Min_t$ such that
${\bf p}\succeq_M{\bf q}$. Since the order $\succeq_M$ is stable (Lemma \ref{12}),
we then have ${\bf p'}={\bf p}\,s\succeq_M{\bf q}\,s={\bf q'}$. The state ${\bf q'}$
belongs to the set $Rts$, and therefore, ${\bf q}'={\bf p'}$ because ${\bf p'}$ has
been chosen to be a minimal element in this set. Thus, we have found a preimage
for ${\bf p'}$ in $Min_t$ whence $Min_{ts}\subseteq Min_t\,s$.
\end{proof}

\noindent\textbf{\emph{Claim 3.}} For any word $t\in\Sigma^*$, there exists
a word $s\in\Sigma^*$ of length at most $n-1$ such that either $|Rst|=1$ or
$|Min_{ts}|<|Min_t|$.

\begin{proof}
Now take arbitrary state ${\bf q}\in Min_t$. Since the graph $\Gamma$ is
strongly connected, there exists a word that maps ${\bf q}$ to an element
of $Max$. If $s$ is a word of minimum length with this property then the path
labelled $s$ does not visit any state of $\mathcal{A}$ twice whence $|s|\le n-1$.
Observe that since ${\bf q}s\in Max$, we also have ${\bf q}s\in Max_{ts}$. Therefore
either $|Rst|=1$ or, by Claim~1 above, ${\bf q}s\notin Min_{ts}$. Since
${\bf q}s\in Min_t\,s$, we conclude from Claim~2 that $Min_{ts}\subset Min_t\,s$.
Thus, $|Min_{ts}|<|Min_t\,s|\le|Min_t|$.
\end{proof}

Using Claim~3, we can easily complete the proof of the lemma. Indeed, applying
it to the case when $t$ is the empty word, we can find a word $s_1$ of length
at most $n-1$ such that either $|Rs_1|=1$ or $|Min_{s_1}|<|Min|$. In the latter
case, applying Claim~3 again, we can find a word $s_2$ of length at most $n-1$
such that either $|Rs_1s_2|=1$ or $|Min_{s_1s_2}|<|Min_{s_1}|$, and so on. Clearly,
the process will stop after at most $|Min|$ steps yielding a word $s=s_1s_2\cdots s_k$
(with $k\le|Min|$ and $|s_i|\le n-1$ for each $i=1,\dots,k$) such that $|Rs|=1$.
Since $|Min|\le(n-r+1)/2$, we have $|s|\le(n-r+1)(n-1)/2$ as required.
\end{proof}

\begin{thm}  \label{t2}
If the transition graph $\Gamma$ of an aperiodic DFA $\mathcal{A}$
with $n$ states is strongly connected, then $\mathcal{A}$ has a synchronizing
word of length at most $(n-1)n/2$.
 \end{thm}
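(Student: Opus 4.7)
The plan is to induct on $n$. The base case $n=1$ is immediate (the empty word synchronizes). For the inductive step I would combine a coarse synchronization via a congruence with a final collapse supplied by Lemma~\ref{18}.

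Strong connectedness of $\Gamma$ means every state of $\mathcal{A}$ is a sink, so by Lemma~\ref{l10} the automaton is synchronizable and for $n\ge 2$ the graph $\Gamma^2$ has non-diagonal pairs. I would pick an almost minimal \textbf{SCC} $M$, whose existence follows by descending the \textbf{SCC} DAG of $\Gamma^2$: since from any diagonal pair only diagonal pairs are reachable, every \textbf{SCC} is either entirely diagonal or entirely non-diagonal, so any terminal non-diagonal \textbf{SCC} (of which there must be at least one in the finite sub-DAG of non-diagonal \textbf{SCC}s) is almost minimal. Let $r$ denote the number of $\rho_M$-classes; the non-emptiness of $M$ forces $r<n$, while trivially $r\ge 1$.

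Next I would form the quotient $\mathcal{A}/\rho_M$ (well defined by Lemma~\ref{12}). It has $r$ states; its transition monoid is a homomorphic image of $S$, hence remains aperiodic; and it inherits strong connectedness from $\Gamma$. By the induction hypothesis it admits a synchronizing word $w$ of length at most $r(r-1)/2$. Applied in $\mathcal{A}$ itself, the word $w$ drives $\Gamma$ into the single $\rho_M$-class $R$ corresponding to the synchronizing state of the quotient. Now Lemma~\ref{18}, applied to this particular $R$, yields a word $s$ of length at most $(n-r+1)(n-1)/2$ with $|Rs|=1$, and then $ws$ synchronizes $\mathcal{A}$ with
\[
|ws|\le \frac{r(r-1)}{2}+\frac{(n-r+1)(n-1)}{2}.
\]

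It remains to verify that this bound does not exceed $n(n-1)/2$. Clearing denominators, the required inequality reduces to $r^2-nr+(n-1)\le 0$, which factors as $(r-1)(r-(n-1))\le 0$ and therefore holds precisely on the range $1\le r\le n-1$, already secured. The main obstacle I expect is not the algebra but rather cleanly isolating the almost minimal \textbf{SCC} and verifying that aperiodicity and strong connectedness pass to the quotient; once those pieces are assembled, the induction closes.
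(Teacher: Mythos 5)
Your proposal is correct and follows essentially the same route as the paper's own proof: quotient by the congruence $\rho_M$ of an almost minimal \textbf{SCC}, apply induction to the (aperiodic, strongly connected) quotient, and finish with Lemma~\ref{18} on the preimage class $R$, with the same final bound $r(r-1)/2+(n-r+1)(n-1)/2\le n(n-1)/2$. Your argument for the existence of an almost minimal \textbf{SCC} (terminal non-diagonal \textbf{SCC} in the attainability order) is just a more explicit version of the paper's one-line justification.
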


\begin{proof}
All states of a DFA whose transition graph is strongly connected
are sinks. Therefore the automaton is synchronizable (Lemma \ref{l10}).

There exists at least one almost minimal \textbf{SCC} $M$ in $\Gamma^2$
because the number of \textbf{SCC}'s is finite and the set of \textbf{SCC}'s
is partially ordered under the attainability relation. Consider the congruence
$\rho_M$ (Lemma \ref{12}) and the quotient $\Gamma/\rho_M$.

It is clear that any synchronizing word of $\Gamma$ synchronizes also $\Gamma/\rho_M$
and that $\Gamma/\rho_M$ is aperiodic and strongly connected. Therefore the graph $\Gamma$ has
a synchronizing word $uv$ where $u$ is a synchronizing word of $\Gamma/\rho_M$ and $v$
is a synchronizing word of the preimage $R$ of the singleton set $(\Gamma/\rho_M)\,u$.
By Corollary~\ref{c}, $\rho_M$ is not trivial, therefore $r=|\Gamma/\rho_M|< n$
and we can use induction assuming $|u|\le (r-1)r/2$. By Lemma \ref{18},
a word $v$ of length at most $(n-r+1)(n-1)/2$ synchronizes $R$. Therefore
$|uv| \le \frac{r(r-1)}2 + \frac{(n-r+1)(n-1)}2
\le \frac{(r-1)(n-1)}2+ \frac{(n-r+1)(n-1)}2
\le \frac{n(n-1)}2$
as required.
\end{proof}

Let us go to the general case.
 \begin{thm}  \label{g1}
Let $\mathcal{A}$ be an aperiodic DFA with $n$ states.
Then the existence of a sink in $\mathcal{A}$ is equivalent
to the existence of a synchronizing word of length at most $n(n-1)/2$.
\end{thm}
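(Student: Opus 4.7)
The forward direction is immediate: the unique state in the image of any synchronizing word is reached from every state, and hence is a sink.

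For the converse, my plan is to induct on $n$ (base $n=1$ trivial), using Theorem \ref{t2} as the workhorse. Let $C\subseteq Q$ denote the terminal \textbf{SCC} of $\Gamma$ (the invariant set of all sinks) and set $k=|C|$. If $k=n$ then Theorem \ref{t2} applies directly. Otherwise $1\le k<n$, and I would factor the synchronizing word as $w_1 w_2$, where $w_1$ drives every state into $C$ and $w_2$ synchronizes the aperiodic strongly connected sub-automaton $\mathcal{A}|_C$; Theorem \ref{t2} then supplies $|w_2|\le k(k-1)/2$.

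For $w_1$, when $k\ge 2$ the equivalence that collapses $C$ into a single fixed-point state is a congruence (since $C$ is invariant), and the resulting quotient is aperiodic with $n-k+1<n$ states and a sink. The induction hypothesis delivers a synchronizing word for the quotient of length at most $(n-k+1)(n-k)/2$, which lifts back to a $w_1$ with $\mathcal{A} w_1\subseteq C$, and the arithmetic identity $(n-k+1)(n-k)/2+k(k-1)/2=n(n-1)/2-(n-k)(k-1)$ closes this sub-case.

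The main obstacle is $k=1$, where $C=\{\mathbf{p}_0\}$ is a fixed-point sink and the collapse congruence is trivial, so the previous trick gains nothing. Here my plan is to emulate the proof of Theorem \ref{t2}: pick an almost minimal \textbf{SCC} $M$ of $\Gamma^2$, form the aperiodic quotient $\mathcal{A}/\rho_M$ (which has $r<n$ states by Corollary \ref{c} and inherits a sink), apply the induction hypothesis to obtain $u$ with $|u|\le r(r-1)/2$, and then synchronize the class $R:=[\mathbf{p}_0]_{\rho_M}$. Because $\mathbf{p}_0$ is a fixed point, $R$ is invariant, so $\mathcal{A}|_R$ is itself an aperiodic DFA with a fixed-point sink and $|R|\le n-r+1$ states; whenever $r\ge 2$, a second application of the induction hypothesis provides a synchronizing $v$ of length at most $(n-r+1)(n-r)/2$, and $r(r-1)/2+(n-r+1)(n-r)/2=n(n-1)/2-(n-r)(r-1)$ closes the bound. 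The genuinely delicate residual case is when every almost minimal $M$ forces $\rho_M$ to be universal — structurally this means $Q\setminus\{\mathbf{p}_0\}$ is a single \textbf{SCC} $C_0$ of $\Gamma$ of size $n-1$ whose only out-edges go to $\mathbf{p}_0$. I would handle it by a direct adaptation of Lemma \ref{18}, choosing $M=\{(\mathbf{p}_0,\mathbf{q}):\mathbf{q}\in C_0\}$ (such a $C_0$ exists as a sink of the DAG of \textbf{SCC}s after removing $\{\mathbf{p}_0\}$) so that $\mathbf{p}_0$ is globally $\succ_M$-maximal and every $\mathbf{q}\in Min_t$ lies in $C_0$; then the shortest path from $\mathbf{q}$ to $\mathbf{p}_0$ in $\Gamma$, of length $\le n-1$, replaces the word produced by strong connectedness in Claim 3. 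Making this adaptation's arithmetic fit within $n(n-1)/2$ despite the unbalanced ordered set ($|Max|=1$ against $|Min|=n-1$) is the step I expect to require the most care and to constitute the main technical obstacle.
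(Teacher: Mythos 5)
Your forward direction and the cases $k=n$, $2\le k<n$, and ($k=1$ with some almost minimal $M$ giving $r\ge 2$) are sound: collapsing the invariant set $C$ of sinks to a fixed point is indeed a congruence, the collapsed state (like $[\mathbf{p}_0]_{\rho_M}$ in your other sub-case) is a fixed point of the quotient, so any synchronizing word of the quotient necessarily lands there and lifts to a word driving $\mathcal{A}$ into the relevant class, and your arithmetic identities check out. But the argument does not close, and you say so yourself: in the residual case ($k=1$, unique fixed-point sink $\mathbf{p}_0$, and every almost minimal \textbf{SCC} $M$ has $\rho_M$ universal) the adaptation of Lemma~\ref{18} you sketch genuinely breaks down. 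There the order $\succ_M$ has $Max=\{\mathbf{p}_0\}$ and $Min$ of size up to $n-1$, the ``without loss of generality $|Max|\ge|Min|$'' step of Lemma~\ref{18} is unavailable because $\mathbf{p}_0$ is absorbing (you can only shrink $Min$, never aim at a large $Max$), and the iteration of Claim~3 yields $(n-1)\cdot(n-1)$ rather than $n(n-1)/2$. This case is not vacuous (a three-state example with $\mathbf{p}_0$ fixed and $\{\mathbf{a},\mathbf{b}\}$ an aperiodic \textbf{SCC} exiting only through $\mathbf{b}\to\mathbf{p}_0$ already realizes it), and your structural claim that it forces $Q\setminus\{\mathbf{p}_0\}$ to be a single \textbf{SCC} with all exits into $\mathbf{p}_0$ is asserted without proof. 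So the hardest configuration is exactly the one left open; this is a genuine gap, not a routine verification.

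The paper avoids this trap by not inducting on $n$ at all in the non-strongly-connected case. It orders the non-terminal \textbf{SCC}s $\Gamma_1,\dots,\Gamma_k$ of $\Gamma$ topologically, invokes an external result (\cite[Theorem 6.1]{R2}) to produce, for each $\Gamma_i$ of size $r_i$, a word $s_i$ of length at most $r_i(r_i+1)/2$ with $\Gamma_is_i\cap\Gamma_i=\emptyset$, so that $s_1\cdots s_k$ evacuates everything into the terminal \textbf{SCC} $C$ of sinks; Theorem~\ref{t2} then synchronizes $C$ with $r(r-1)/2$ more letters, and $\sum r_i(r_i+1)/2+r(r-1)/2\le n(n-1)/2$. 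In particular, in your residual case the paper spends $(n-1)n/2$ letters just crossing the single \textbf{SCC} $C_0$ via Rystsov's bound and then $0$ letters on $C=\{\mathbf{p}_0\}$, which is precisely the budget your $Max$/$Min$ adaptation cannot meet. If you want to salvage your induction, you need either this evacuation lemma or some substitute for it; without one, the proof is incomplete.
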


\begin{proof}
It is clear that the existence of a synchronizing word implies
the existence of a sink.

For the converse, let us consider a DFA with at least one sink.
By Lemma \ref{l10}, the automaton is synchronizable. We may assume in
view of Theorem \ref{t2} that the transition graph $\Gamma$ of $\mathcal{A}$
is not strongly connected.

It is clear that the collection $C$ of all sinks of $\Gamma$ forms
a \textbf{SCC} of $\Gamma$ which is the least \textbf{SCC} with respect to the
attainability order. Let $r<n$ stand for the cardinality of $C$. Let $\Gamma_i$
($i =1, 2, \dots , k$) be all other \textbf{SCC}'s of $\Gamma$. We may assume
that $\Gamma_i$ are numbered so that $i\le j$ whenever there is a path in $\Gamma$
from $\Gamma_i$ to $\Gamma_j$. Let $r_i$ be the cardinality of $\Gamma_i$. It easily
follows from \cite[Theorem 6.1]{R2} that there exists a word $s_i$ of length at most
$r_i(r_i+1)/2$ such that $\Gamma_is_i \cap \Gamma_i$ is empty. Then the product
$s_1\cdots s_k$ of maps $\Gamma$ into the \textbf{SCC} $C$. By Theorem \ref{t2},
$C$ has a synchronizing word $s$ of length at most $r(r-1)/2$. Therefore the
word $s_1\cdots s_ks$ synchronizes $\Gamma$ and
\begin{equation}
\label{sum}
|s_1\dots s_ks| \le \sum_{i=1}^k \frac{r_i(r_i+1)}2 + \frac{r(r-1)}2.
\end{equation}
Using the equality $\sum_{i=1}^k r_i+r=n$, it is easy to calculate that the right-hand
side of the inequality \ref{sum} does not exceed $(n-1)n/2$.
\end{proof}

\begin{Cor}
The \v{C}ern\'y conjecture holds for aperiodic automata.
 \end{Cor}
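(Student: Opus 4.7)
The plan is essentially to read off the corollary from Theorem~\ref{g1}, since the heavy lifting has already been done. First I would observe that a synchronizable aperiodic DFA automatically satisfies the hypothesis of Theorem~\ref{g1}: if $w$ is any synchronizing word, then the common image state is a sink of $\mathcal{A}$, because for every state $\mathbf{p}$ we have $\mathbf{p}w = \mathbf{q}w$ for the specific state $\mathbf{q}$ that $w$ produces. Hence the ``existence of a sink'' hypothesis is automatic here, and Lemma~\ref{l10} anyway guarantees the equivalence in the aperiodic setting.

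Next I would invoke Theorem~\ref{g1} directly: every $n$-state aperiodic DFA with a sink admits a synchronizing word of length at most $n(n-1)/2$. To conclude the \v{C}ern\'y bound, it is enough to notice the elementary inequality $n(n-1)/2 \le (n-1)^2$ valid for every $n \ge 1$, since this is equivalent to $n \le 2(n-1)$, i.e.\ $n \ge 2$ (with equality trivially for $n=1$). Thus the synchronizing word produced by Theorem~\ref{g1} already fits inside the \v{C}ern\'y bound of $(n-1)^2$, with room to spare.

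Finally, by Sch\"{u}tzenberger's theorem \cite{Sh} mentioned in the introduction, the class of aperiodic DFAs coincides (up to syntactic considerations) with the class of automata whose accepted languages are star-free, so the same conclusion applies verbatim to automata recognizing only star-free languages. There is no real obstacle in this corollary: all the combinatorial work has been absorbed into Lemma~\ref{l6}, Lemma~\ref{18} and Theorems~\ref{t2} and~\ref{g1}; the corollary is simply a restatement of the stronger bound $n(n-1)/2$ in the weaker language of the \v{C}ern\'y conjecture.
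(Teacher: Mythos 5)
Your proposal is correct and matches the paper's (implicit) derivation exactly: the corollary is an immediate consequence of Theorem~\ref{g1}, since synchronizability yields a sink and $n(n-1)/2\le(n-1)^2$ for all $n\ge 1$. Nothing further is needed.
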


\section*{Acknowledgments}
I am very grateful to  M. V. Volkov for helpful and detailed comments
that proved to be highly useful in improving the presentation and style
of the paper.

\end{document}